\newcommand{\parallelogram}{%
    \tikz[baseline=0ex, line width = .7pt]{
        \draw (0,0) -- ++(0.3,0) -- ++(0.1,0.2) -- ++(-0.3,0) -- cycle;
    }%
}
\pgfplotsset{compat=1.18}
\newtheorem{lemma}{Lemma}
\newtheorem{remark}{Remark}
\newtheorem{theorem}{Theorem}
\newtheorem{assumption}{Assumption}
\def\@IEEEtablestring{table}
\long\def\@makecaption#1#2{%
	\ifx\@captype\@IEEEtablestring%
	\begin{center}{\footnotesize #1}\\{\footnotesize\scshape #2}\end{center}%
	\@IEEEtablecaptionsepspace
	\else
	\@IEEEfigurecaptionsepspace
	\setbox\@tempboxa\hbox{\footnotesize #1.~~ #2}%
	\ifdim \wd\@tempboxa >\hsize%
	\setbox\@tempboxa\hbox{\footnotesize #1.~~ }%
	\parbox[t]{\hsize}{\footnotesize \noindent\unhbox\@tempboxa#2}%
	\else%
	\ifcenterfigcaptions \hbox to\hsize{\footnotesize\hfil\box\@tempboxa\hfil}%
	\else \hbox to\hsize{\footnotesize\box\@tempboxa\hfil}%
	\fi\fi\fi}
\title{\LARGE \bf
3D Directed Formation Control with Global Shape Convergence using Bispherical Coordinates \\
(extended version)}
\author{Omid Mirzaeedodangeh, Farhad Mehdifar, and Dimos V. Dimarogonas
\thanks{This work is supported by ERC CoG LEAFHOUND, the KAW foundation, and the Swedish Research Council (VR).}
\thanks{O. Mirzaeedodangeh, F. Mehdifar and D. V. Dimarogonas are with the Division of Decision and Control Systems, KTH Royal Institute of Technology, Stockholm, Sweden.   {\tt\small omidm@kth.se; mehdifar@kth.se; dimos@kth.se}}
}
\begin{document}

\maketitle
\thispagestyle{empty}
\pagestyle{empty}

\begin{abstract}
In this paper, we present a novel 3D formation control scheme for directed graphs in a leader-follower multi-agent setup,, achieving (almost) global convergence to the desired shape. Specifically, we introduce three controlled variables representing bispherical coordinates that uniquely describe the formation in 3D. Acyclic triangulated directed graphs (a class of minimally acyclic persistent graphs) are used to model the inter-agent sensing topology, while the agents' dynamics are governed by single-integrator model.
Our analysis demonstrates that the proposed decentralized formation controller ensures (almost) global asymptotic stability 
while avoiding potential shape ambiguities in the final formation.
Furthermore, the control laws are implementable in arbitrarily oriented local coordinate frames of follower agents using only low-cost onboard vision sensors, making it suitable for practical applications. Finally, we validate our formation control approach by a simulation study.
\end{abstract}
\section{Introduction}
Formation control in multiagent systems has undergone extensive investigation over the past decade.  Depending on the sensing and controlled variables, previous research efforts can be broadly classified into \cite{oh2015survey}, \cite{anderson2008rigid}: position-based \cite{ren2007information}, displacement-based \cite{ji2007distributed}, distance-based 
\cite{MEHDIFAR2020109086},
\cite{summers2009control}, bearing-based \cite{zhao2015bearing}, and angle-based  
\cite{{jing2019angle},{buckley2021infinitesimal}, 
{chen2023angle}, {chen2022globally},{chen2022maneuvering}}
methodologies. Works in \cite{cao2019ratio},
\cite{kwon2022sign} provide more recent classifications on formation control methods as well as a comparative literature review on issues related to target formation’s constraints, required measurements, and shape convergence.

Within these categories, the position-based approach requires agents to possess a common understanding of a global coordinate system. Conversely, the displacement-based (often referred to as consensus-based) and bearing-based methods require that agents' local coordinate frames are perfectly aligned (have common orientation). 

Meanwhile, coordinate-free techniques, namely, distance- and angle-based methods in 
\cite{MEHDIFAR2020109086},
\cite{summers2009control}, \cite{jing2019angle},
\cite{chen2023angle}, and \cite{cao2019ratio} present a more attractive architecture for formation control due to their reduced implementation complexities and their ability to characterize the desired formation shape by a set of coordinate-free scalar variables, typically involving distances or angles. These scalar variables serve to define formation errors for the individual agents. Furthermore, agents must possess measurements of vectorized relative information of their neighboring agents (e.g., relative positions or bearings) in their local coordinate frames to constitute a control law. Thus, coordinate-free approaches facilitate the design and implementation of formation control laws within the confines of agents' local coordinate frames, obviating the necessity for global position measurements, such as those provided by GPS systems, or the presumption that agents' local coordinate frames are aligned. 
Another significant benefit of coordinate-free formation control strategies is the cost-effectiveness for agents. This is because they mandate simpler sensing and interaction mechanisms. While a majority of coordinate-free techniques depend on relative position measurements for all agents, only a few existing methods use only bearing or vision-based measurements  \cite{chen2023angle}, \cite{cao2019ratio}, \cite{mehdifar20222}, \cite{chen2020angle}. Vision based measurements are more practical since bearing data is captured straightforwardly using onboard cameras, making it more advantageous in real-world settings \cite{tron2016distributed}.

It is worth noting that the majority of existing research on coordinate-free formation control operates under the assumption of bidirectional sensings. This line of research often relies on various graph rigidity concepts, including distance, angle, ratio-of-distances, sign, and weak rigidity notions 
 \cite{{jing2019angle}, {chen2023angle}, {cao2019ratio}, {kwon2020generalized}, {kwon2022sign}}.
However, from a practical standpoint, it is more realistic to consider directed sensing among agents due to inherent sensing limitations or issues introduced by measurement mismatches in undirected formation control\cite{mou2015undirected}. 
In this regard, the concept of persistent graphs emerged as the directed counterpart of distance rigidity \cite{yu2007three}. Earlier control designs for achieving persistent formations can be found in \cite{summers2009control}. 
Regrettably, most coordinate-free formation control methods offer guarantees of local (non global) convergence to the desired shape. This means that even if agents eventually meet the desired formation constraints, they might not converge to the desired shape due to issues like reflection, flip, and flex ambiguities as highlighted in 
\cite{MEHDIFAR2020109086,chen2020angle,cao2019ratio,jing2019angle, kwon2020generalized,kwon2022sign}.

To deal with the issue of ambiguous shapes, several recent studies have proposed 2D and 3D distance-based formations by incorporating additional formation constraints to agents, such as signed areas/angles
and volumes, to allow defining the target shape uniquely \cite{kwon2022sign},
\cite{sugie2020global,liu2019directed,cao2019almost,anderson2017formation,babazadeh2022directed}.
However, these approaches inadvertently introduce undesired equilibria at particular agent positions through the control design procedure, which significantly complicates the controllers' gain adjustment for ensuring global shape convergence \cite{sugie2020global}, \cite{liu2019directed}, \cite{cao2019almost}.
Recent studies have introduced an alternative approach to guarantee global convergence of coordinate-free formations.
For instance, \cite{liu2020distance}, \cite{liu2021orthogonal}, utilize formation error variables along orthogonal directions to characterize directed 2D and 3D distance-based formations with guaranteed (almost) global convergence to the desired shape, respectively. Furthermore, \cite{chen2022globally} has provided global stabilization for the angle-rigid formations. However, agents must use relative position information. Moreover, recently \cite{mehdifar20222} proposed a 2D directed formation control approach using orthogonal bipolar coordinate variables to achieve almost global convergence to the desired shape. The main advantages of \cite{mehdifar20222} w.r.t. \cite{liu2020distance} and \cite{chen2022globally} are in employing bearing measurements instead of relative position measurements for follower agents and also providing extra degrees of freedom for adjusting scale and orientation of the formation. Nevertheless, the results in \cite{chen2022globally} and\cite{mehdifar20222} are only limited to 2D formations.

In this work, inspired by \cite{mehdifar20222}, we present a 3D directed coordinate-free formation control method using bispherical coordinates. Our method ensures (almost) global convergence to the desired shape while using only vision-based sensing for follower agents.
We utilize triangulated acyclic minimally persistent graphs to model the inter-agent sensing topology, which gives rise to a distance-rigid directed leader-follower structure with a minimal number of edges.
Moreover, our novel approach utilizes local bispherical coordinates to characterize formation errors, relying on onboard vision sensors for bearing and distance ratio measurements. This technique circumvents the need for relative position measurements, which are often challenging to obtain in environments like deep space.
Leveraging the fact that each follower’s formation errors can be reduced (independently) by moving along the three orthogonal directions of its associated bishperical coordinate basis, we design decentralized controllers achieving (almost) global asymptotic convergence to the target shape. To the best of the authors' knowledge, this work is the first to employ bispherical coordinates for coordinate-free 3D formation control for achieving (almost) global shape convergence without encountering undesired equilibria.
\section{Problem Formulation} 
Consider a multiagent system composed of $n$ mobile agents in 3D, governed by the following dynamics:
\begin{equation}
\dot{p}_i=u_i,\quad i = 1,\dots,n,
 \label{dynamics}
\end{equation}
where $p_i \in \mathbb{R}^3$ and $u_i \in \mathbb{R}^3$ are the position and the velocity-level control input of agent $i$ expressed with respect to a global coordinate frame, respectively. 
Let us model the sensing topology among agents as a directed graph $\mathcal{G}=(\mathcal{V}, \mathcal{E})$, where $\mathcal{V}=\{1,2, \ldots, n\}$ is the set of vertices representing the agents and $\mathcal{E}=\{(j, i) \mid j, i \in \mathcal{V}, j \neq i\}$, where if $(j, i) \in \mathcal{E} \Rightarrow(i, j) \notin \mathcal{E}$, is the set of directed edges modeling the directed sensing among the agents. To be more precise, $(j, i) \in \mathcal{E}$ denotes an edge that starts from vertex $j$ (source) and sinks at vertex $i$, and its direction is indicated by $j \rightarrow i$. For $(j, i)$, we say $i$ is the neighbor of $j$.  The relative position vector ${p}_{ij}$, and the \textit{relative bearing} vector $\hat{v}_{ij} \in \mathbb{R}^3$ corresponding to the directed edge $(j, i)$ are defined as:
\begin{equation}
p_{j i}:=p_i-p_j, \quad \hat{v}_{ij}\coloneqq \frac{p_i-p_j}{\lVert{p_i-p_j}\rVert},\quad (j, i) \in \mathcal{E}.
\label{v_hat}
\end{equation}
Particularly, in this article, the physical interpretation of the directed edge $(j, i) \in \mathcal{E}$ is that only agent $j$ can measure the relative bearing of agent $i$ with respect to itself, i.e., $\hat{v}_{j i}$, and not vice versa. As will be highlighted later, we will also assume that only agent 2 is capable of measuring its relative position w.r.t. its neighbor, which is agent 1. 
Moreover, we assume that the graph $\mathcal{G}$ is triangulated and imposes a hierarchical structure, where agent 1 is the \textit{leader}, agent 2 is the \textit{first follower} with agent 1 acting as its only neighbor, agent 3 is the \textit{second follower} following agents 1 and 2, and agents $i \geq 4$ are \textit{ ordinary followers} with each one having exactly three neighbors to follow with smaller indices. Hence, we impose the following assumption for constructing $\mathcal{G}$.
\begin{assumption} \label{assum:G} The directed sensing graph $\mathcal{G}$ is constructed such that:
\begin{enumerate} 
\item[(i)]  $\operatorname{out}(i)=i-1, \forall i \leq 3$, and $\operatorname{out}(i)=3, \forall i \geq 4$;
\item[(ii)] if there is an edge between agents $i$ and $j$, where $i<j$, the direction must be $j \rightarrow i$;
\item[(iii)] if $(k, i),(k, j) \in \mathcal{E}$ then $(j, i) \in \mathcal{E}$.
\end{enumerate}
\end{assumption}
Here, out$(i)$ denotes the out-degree of vertex $i$ that is the number of edges in $\mathcal{E}$ whose source is vertex $i$ and whose sinks are in $\mathcal{V} \backslash\{i\}$.
It is important to note that (i) and (ii) of Assumption 
\ref{assum:G} impose $\mathcal{G}$ to be acyclic minimally persistent with edge set cardinality $|\mathcal{E}|=3 n-6$ \cite{anderson2008rigid}.
Moreover, Assumption 
\ref{assum:G}.(iii) ensures that $\mathcal{G}$ is triangulated and composed of acyclic-directed tetrahedrons. Fig. \ref{fig:framework-general} shows an example of $\mathcal{G}$ constructed under Assumption 
\ref{assum:G}.
\begin{remark}
It is known that graphs satisfying Assumption \ref{assum:G} can be systematically constructed using 3D Henneberg type I insertion \cite{babazadeh2022directed}, \cite{liu2021orthogonal}. Such graphs belong to a class of 
acyclic minimally persistent graphs,
which are the directed counterpart of undirected distance rigid graphs \cite{anderson2008rigid}, \cite{summers2009control}, \cite{yu2007three}. Minimally persistent graphs (persistent graphs with a minimum number of edges) are favorable in practice since they require a minimum number of relations (sensing) among agents.
\end{remark}

Associated with each ordinary follower $l\geq4$ in $\mathcal{G}$ with three neighbors $i$, $j$, and $k$, we can define a signed volume of the tetrahedron formed by vertices $i<j<$ $k<l$  as follows\cite{liu2021orthogonal}:
\vspace{-1mm}
\begin{equation}
\begin{aligned}
V_{ijkl} & =\frac{1}{6} \operatorname{det}\left[\begin{array}{cccc}
1 & 1 & 1 & 1 \\
p_i & p_j & p_k & p_l
\end{array}\right] \\
& =-\frac{1}{6}\left(p_i-p_l\right)^{\top}\left[\left(p_j-p_l\right) \times\left(p_k-p_l\right)\right].
\label{volume}
\end{aligned}
\end{equation}
The sign of $V_{ijkl}$ is interpreted as follows: If an observer positioned at vertex $l$ observes the sequence of vertices $i, j$, and $k$ in a counterclockwise orientation with respect to the plane containing $i, j$, and $k$, denoted as $\parallelogram _{i j k}$, the sign of $V_{ijkl}$ is positive. Conversely, a clockwise observation yields a negative sign for $V_{ijkl}$. Note that this volume metric becomes zero if any triad of vertices (i.e. $i,j,k$) becomes collinear or if all four vertices lie on the same plane. 
We define the stacked signed volumes corresponding to all tetrahedral sub-graphs of $\mathcal{G}$ by the mapping $\mathbf{V}$ : $\mathbb{R}^{3 n} \rightarrow \mathbb{R}^{n-3}$:
\begin{equation}
 \begin{aligned}
& \mathbf{V}(p)=\left[\ldots, \frac{1}{6} \operatorname{det}\left[\begin{array}{cccc}
1 & 1 & 1 & 1 \\
p_i & p_j & p_k & p_l
\end{array}\right], \ldots\right]^\top, \\
& \forall(l, i),(l, j),(l, k) \in \mathcal{E}-\{(2,1),(3,1),(3,2)\},
\end{aligned}   
\end{equation}
where $p=\left[p_1, \ldots, p_n\right]$ and its $m$-th ($m\leq n-3$) component is the signed volume of the $m$-th tetrahedron constructed with vertices $i<j<$ $k<l$. For example, the stacked signed volumes of the graph in Fig.\ref{fig:framework-general} has two elements, where $V_{1234}>0$  (shown in blue) and $V_{1345}<0$ (shown in red).
\begin{figure}[!tbp]
	\centering
	\begin{subfigure}[t]{0.35\linewidth}
		\centering
		\includegraphics[width=\linewidth]{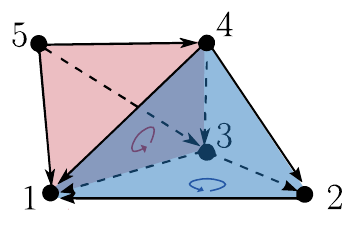}
		\caption{}
		\label{fig:framework-general}
	\end{subfigure}%
	~
	\begin{subfigure}[t]{0.3\linewidth}
		\centering
		\includegraphics[width=\linewidth]{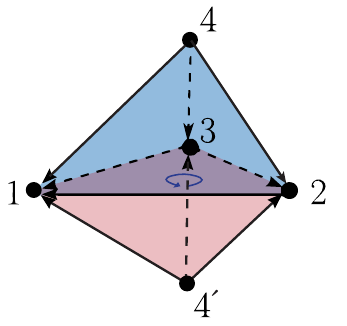}
		\caption{}
		\label{fig:reflected}
	\end{subfigure}%
 ~
 \begin{subfigure}[t]{0.3\linewidth}
		\centering
		\includegraphics[width=\linewidth]{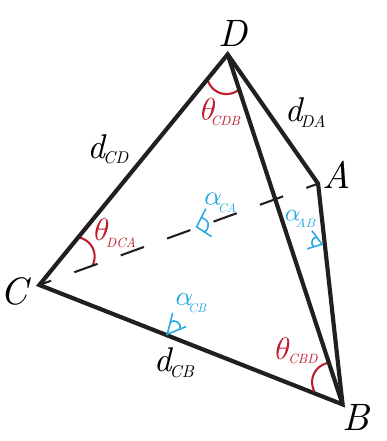}
		\caption{}
		\label{fig:tetrahedron}
	\end{subfigure}
 \caption{(a) A graph constructed under Assumption 1 with 5 agents, which consists of two tetrahedral subgraphs. The blue tetrahedron has a positive volume $(V_{1234}>0)$ and the red one has a negative volume $(V_{1235}<0)$. (b) Position of vertex 4 makes a tetrahedron with positive volume $V_{1234}>0$ (blue) while its reflected position 4' leads to the same volume ($|V_{1234}|=|V_{1234'}|$) with a negative sign $V_{1234'}<0$ (red). (c) Tetrahedron $ABCD$ and three of its edge lengths, face angles, and dihedral angles.\vspace{-0.4cm}}
	\label{fig:frameworks}
\end{figure}
 It is known that based on the directed sensing graph $\mathcal{G}$ under Assumption 
\ref{assum:G}, we can \textit{uniquely} define a desired formation characterized by the followings\cite{liu2021orthogonal}:
\begin{enumerate}
    \item A set of $3 n-6$ \textit{desired distances} $d_{j i}^*$, appointed to the directed edges $(j, i) \in \mathcal{E}$.
    \item A set of $n-3$ \textit{desired signed volumes} $V_{ijkl}^* \in \mathbf{V}(p^*)$
\end{enumerate}
Given a desired formation characterized by a graph $\mathcal{G}$ (under Assumption 
\ref{assum:G}) and the corresponding sets of desired distances and signed volumes, the objective is to design a decentralized controller for \eqref{dynamics} such that:
\begin{subequations}
\begin{align}
\lVert{p_i(t)-p_j(t)}\rVert & \rightarrow d_{j i}^* \;\;\,\quad \text { as } \quad t \rightarrow \infty, \\
V_{ijkl}(t) & \rightarrow V_{ijkl}^* \quad  \text { as } \quad t \rightarrow \infty,
\end{align}
\label{objective}%
\end{subequations}
for all $(j, i) \in \mathcal{E}$ and all $V_{ijkl} \in \mathbf{V}$, respectively, while avoiding zero distance among all neighboring agents (i.e., $\left\|p_{j i}\right\| \neq 0, \forall(j, i) \in \mathcal{E}, \forall t \geq 0$).

Note that characterization of the desired formation just by a set of desired distances is not unique and suffers from local shape convergence and reflection issues due to the existence of undesired shapes (namely reflection, flip, and flex ambiguities) (see \cite{anderson2008rigid}, \cite{liu2019directed}, \cite{anderson2017formation} for examples). To tackle these issues, extra types of formation parameters (e.g., signed volume, signed area, edge-angle) have been recently employed along with the distances to characterize the desired formation uniquely, which is necessary for having global shape convergence \cite{sugie2020global}, \cite{cao2019almost}, \cite{anderson2017formation}, 
\cite{liu2021orthogonal}.  
Particularly, the notion of strong congruency was introduced in \cite{liu2020distance} and \cite{liu2021orthogonal} for distinguishing the desired shape of a formation 
from its reflected version. 
It is known that satisfaction of \eqref{objective} is equivalent to strong congruency \cite{liu2021orthogonal} between the actual formation of the agents and the desired formation (see \cite{liu2021orthogonal}, Lemma 1]). This means that if \eqref{objective} gets satisfied, the agents can achieve the desired formation only up to rotations and translations \cite{liu2021orthogonal}.
As an example, consider distances $\left\|p_{41}\right\|,\left\|p_{42}\right\|,\left\|p_{43}\right\|$,  and the signed volume $V_{1234}$ in Fig. \ref{fig:reflected}, where $V_{1234}>0$. Assuming vertex $4'$ as the reflected version of vertex $4$ without altering its distances with respect to $1$, $2$ and $3$, then we get $V_{1234'} < 0$. In general, this property allows us to distinguish the position of agent $l$ from its reflection with respect to the plane passing through agents $i$, $j$, and $k$.
\section{Bispherical Coordinate System}
Consider a tetrahedron defined by vertices $A,B,C,D$. It encompasses 4 \textit{faces} $F_{\mathcal{P}\mathcal{Q}\mathcal{R}}$ for each triad of $\{\mathcal{P},\mathcal{Q},\mathcal{R}\} \in \{A,B,C,D\}$,  6 \textit{edge lengths} $d_{\mathcal{P}\mathcal{Q}}$ for each unique pair of vertices $\{\mathcal{P},\mathcal{Q}\} \in \{A,B,C,D\}$, 12 \textit{face angles} $\theta_{\mathcal{P}\mathcal{Q}\mathcal{R}}$ defined by the angles between edges ${\mathcal{P}\mathcal{Q}}$ and ${\mathcal{Q}\mathcal{R}}$ for each set of vertices $\{\mathcal{P},\mathcal{Q},\mathcal{R}\} \in \{A,B,C,D\}$, and 6 \textit{dihedral angles} $\alpha_{\mathcal{P}\mathcal{Q}}$ representing the angles between the faces adjoining edge ${\mathcal{P}\mathcal{Q}}$ for each vertex pair $\{\mathcal{P},\mathcal{Q}\} \in \{A,B,C,D\}$.
Fig.\ref{fig:tetrahedron} illustrates three of the edge lengths, face angles, and dihedral angles on tetrahedron $ABCD$.
\subsection{Bispherical Coordinates}
Now, consider a tetrahedral subgraph of $\mathcal{G}$ where $i<j<k<l$. The first bispherical coordinate, denoted by $\xi_l \in [0,\pi]$, is equal to the edge angle $\theta_{ilj}$, which can be expressed as the angle between relative bearing vectors from agent $l$ to the neighboring agents $i$ and $j$ as follows:
\begin{flalign}
    \xi_l:=\theta_{ilj}&=\cos ^{-1}\left(\hat{v}_{li}^\top \hat{v}_{lj}\right),\quad l\in \mathcal{V} \backslash\{1,2,3\}.
 \label{xi_l}
\end{flalign}
Note that for agent 3 we define: $\xi_{3}:=\cos ^{-1}\left(\hat{v}_{31}^\top \hat{v}_{32}\right)$.
Fig. \ref{frame3d} illustrates the relative bearing vectors among the agents in a tetrahedral subgraph and the first bispherical coordinate $\xi_k$ and $\xi_l$ for agents $k$ and $l$, respectively.
Defining the \textit{ratios of distances} 
$r_{lij} :=\left\|p_{l i}\right\|/\left\|p_{l j}\right\|=d_{l i} /d_{l j}$ and $r_{lik} :=\left\|p_{l i}\right\|/\left\|p_{l k}\right\|=d_{l i}/d_{l k}$ 
for $l \in \mathcal{V} \backslash\{1,2,3\}$, one can define the second bispherical coordinate $\eta_l$ as:
\vspace{-1mm}
\begin{equation}
\eta_l:= \ln r_{lij} = \ln \frac{\left\|p_{l i}\right\|}{\left\|p_{l j}\right\|}, \quad l \in \mathcal{V} \backslash\{1,2,3\},
\label{eta}
\vspace{-1mm}
\end{equation}
where $\eta_l \in \mathbb{R}$. Note that, when agent $l$ approaches agent $i$ or agent $j$ (i.e., either $d_{li} \rightarrow 0$ or $d_{li} \rightarrow 0$ ), $\eta_l$ tends to $\pm \infty$. Note that, only one ratio of the distance is defined for agent 3, that is $\eta_3 = \ln(d_{31}/d_{32})$.
Finally, the third bispherical coordinate, denoted by $\varphi_l \in [0,2\pi) ,\; l \geq4$,
is the angle between half-planes $\parallelogram_{ijk}$ and $\parallelogram_{ijl}$ measured in the counterclockwise direction from the former to the latter (see Fig. \ref{frame2d}). 
In particular, one can obtain $\varphi_l$ as follows:
\begin{equation}
\resizebox{.88\linewidth}{!}{$
\varphi_l= \begin{cases}\alpha_{ij}& \text { if }\mathrm{sgn}\left(\hat{v}_{li}^\top\left(\hat{v}_{lj} \times \hat{v}_{lk}\right)\right)=\mathrm{sgn}\left(V_{ijkl}\right)>0 \\ 2 \pi-\alpha_{ij} & \text { if }\mathrm{sgn}\left(\hat{v}_{li}^\top\left(\hat{v}_{lj} \times \hat{v}_{lk}\right)\right)=\mathrm{sgn}\left(V_{ijkl}\right)<0\\ \pi & \text { if } V_{ijkl}=0 \text{ and } \left(\hat{v}_{j i} \times\hat{v}_{li}\right) ^\top\left(\hat{v}_{j i} \times \hat{v}_{ki}\right) <0\\ 0 &\text{ otherwise } \end{cases}
\label{phi_l}
$}\;,
\end{equation}
where $\alpha_{ij} \in (0,\pi)$ is the dihedral angle of the tetrahedron $ijkl$ on edge $(j,i)$. 
Note that the sign of $\hat{v}_{li} ^\top\left(\hat{v}_{lj} \times \hat{v}_{lk}\right)$ is the same as the sign of $V_{ijkl}$ since $\hat{v}_{li} ^\top\left(\hat{v}_{lj} \times \hat{v}_{lk}\right)$ consists of the normalized vectors used in \eqref{volume}.
The third and fourth cases in \eqref{phi_l}
correspond to when agent $l$ and all its neighbors are in the same plane meaning that $V_{ijkl}=0$. As $\alpha_{ij}$ is undefined in these cases, if the half-plane including agents $i$, $j$, and $k$ is the same half-plane including agents $i$, $j$, and $l$ then $\varphi_l =0$ . Otherwise, $\varphi_l =\pi$. Moreover, when 3 or 4 number of agents in the subgraph with agents $i$, $j$, $k$, and $l$ are collinear, $\varphi_l=0$. Finally, note that for agent 3 we always have $\varphi_3 = 0$ since it always lies on $\parallelogram_{123}$.
Also, note that one can find $\beta_{ij}:=\cos\alpha_{ij}$ by calculating the angle between the normal vectors of faces $F_{ijk}$ and $F_{ijl}$ as follows:
\begin{equation}
\begin{aligned}
        \beta_{i j}&=\frac{\left(\hat{v}_{i j} \times \hat{v}_{i k}\right)^\top\left(\hat{v}_{i j} \times \hat{v}_{i l}\right)}{\lVert\hat{v}_{i j} \times \hat{v}_{i k}\rVert\lVert\hat{v}_{i j} \times \hat{v}_{i l}\rVert}\\&=\frac{\cos \theta_{k i l}-\cos \theta_{j i k} \cos \theta_{j i l}}{\sin \theta_{j i k} \sin \theta_{j i l}}.
    \end{aligned}
    \label{alpha}
\end{equation}
\subsection{Desired Formation Characterization}
Given a target formation,
one can use the desired bispherical coordinates of agent $m\geq3$ with respect to its neighbors to uniquely characterize the desired formation shape using the following expressions:
\begin{flalign}  
&\eta_l^*=\ln \frac{d_{l i}^*}{d_{l j}^*},\nonumber\\&\xi_l^*=\theta_{ilj}^*=\cos^{-1}\left( \frac{{d_{li}^*}^2+{d_{l j}^*}^2-{d_{ji}^*}^2}{2d_{l i}^*d_{l j}^*}\right),\label{bispherical_star}\\
&\varphi_l^*= \begin{cases}\alpha_{ij}^* = \cos^{-1}\left(\frac{\cos \theta_{k i l}^*-\cos \theta_{j i k}^* \cos \theta_{j i l}^*}{\sin \theta_{j i k}^* \sin \theta_{j i l}^*}\right)& \text { if }V_{ijkl}^*>0 \\ 2 \pi-\alpha_{ij}^* & \text { if }V_{ijkl}^*<0 \end{cases}\nonumber
\end{flalign}
Note that, for agent 3 we have:
\begin{equation}
\eta_3^*:=\ln \frac{d_{31}^*}{d_{32}^*},\quad\xi_3^* := \cos^{-1}\left( \frac{{d_{31}^*}^2+{d_{32}^*}^2-{d_{21}^*}^2}{2d_{31}^*d_{32}^*}\right).    
\end{equation}
\begin{lemma}\label{lem:objective}
   Given a desired formation shape based on a specific directed sensing graph $\mathcal{G}=(\mathcal{V}, \mathcal{E})$ under Assumption \ref{assum:G}, satisfying
   \end{lemma}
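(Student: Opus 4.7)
The plan is to establish the equivalence between the desired bispherical coordinate values and the formation objective \eqref{objective} by inducting on the agent index $l$, exploiting the acyclic hierarchical structure imposed by Assumption \ref{assum:G}. The idea is constructive: once the positions of all agents with smaller indices are determined (up to the global rotation and translation inherent to coordinate-free formations), show that achieving the prescribed bispherical coordinates of agent $l$ with respect to its neighbors is equivalent to achieving its target distances and the correct sign of its associated signed volume.

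For the base case involving agents $1$, $2$, and $3$, I would first note that agent $1$ serves as an anchor, agent $2$ is positioned relative to agent $1$ purely by the distance $d_{21}^*$, and agent $3$ is positioned relative to $\{1,2\}$ by $\eta_3^*$ and $\xi_3^*$. Specifically, applying the law of cosines in triangle $1\text{-}2\text{-}3$, the two equations in $\xi_3^*$ and $\eta_3^*$ form a system whose solution uniquely recovers $d_{31}^*$ and $d_{32}^*$, and conversely. Since agent $3$ is not assigned a signed-volume constraint, its position is fixed only up to rotation about the segment connecting agents $1$ and $2$, which is consistent with the rotation/translation freedom of the overall formation (i.e., strong congruency).

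For the inductive step concerning an ordinary follower $l \geq 4$ with neighbors $i<j<k$, I would assume the sub-formation on agents indexed below $l$ is already strongly congruent to its desired counterpart, so that in particular $d_{ij}^*$, $d_{ik}^*$, and $d_{jk}^*$ are realized and $V^*$ for any previously formed tetrahedron has the correct sign. Given the fixed positions of $i,j,k$, I would show that the mapping from $(d_{li},d_{lj},d_{lk},\mathrm{sgn}(V_{ijkl}))$ to $(\xi_l,\eta_l,\varphi_l)$ is bijective. Concretely: the pair $(\xi_l,\eta_l)$ together with the already fixed $d_{ij}^*$ determines $d_{li}$ and $d_{lj}$ via the law of cosines in triangle $ilj$; analogous reasoning, in which $\theta_{jlk}$ and $\theta_{ilk}$ are reconstructed from the bispherical data through the relation used in \eqref{alpha}, recovers $d_{lk}$; finally, $\varphi_l^*$ (via the cases in \eqref{phi_l}) selects the half-space relative to the plane $\parallelogram_{ijk}$ in which agent $l$ lies, which is exactly what pins down $\mathrm{sgn}(V_{ijkl})$. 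Combining the base case and the inductive step then yields both implications of the lemma.

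The main obstacle I anticipate is handling the degenerate cases embedded in the piecewise definition of $\varphi_l$ in \eqref{phi_l}. When $V_{ijkl}^*=0$ the dihedral angle $\alpha_{ij}$ is undefined, and when three or four of the agents $\{i,j,k,l\}$ become collinear the bispherical representation itself degenerates (arguments of $\cos^{-1}$ on the boundary of $[-1,1]$, $\sin\theta$ terms vanishing in \eqref{alpha}). I would cordon these configurations off as a measure-zero set of admissible shapes, consistent with the \emph{(almost) global} qualifier used throughout the paper, and verify that at non-degenerate desired shapes the $\cos^{-1}$ branches used to define $\xi_l^*$ and $\alpha_{ij}^*$ remain strictly inside $(0,\pi)$ so that their inverse expressions are single-valued. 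A secondary subtlety is checking that the case split of $\varphi_l^*$ consistent with $\mathrm{sgn}(V_{ijkl}^*)$ indeed matches the orientation convention used in the signed-volume definition \eqref{volume}; this should follow by comparing the scalar triple product $\hat v_{li}^\top(\hat v_{lj}\times \hat v_{lk})$ to the determinant expression in \eqref{volume}, since both change sign simultaneously under reflection through $\parallelogram_{ijk}$.
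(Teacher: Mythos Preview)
Your inductive plan is essentially the same as the paper's: both arguments step through the hierarchy $2,3,4,\ldots$, use the law of cosines on triangle $i\!-\!l\!-\!j$ to pass between $(\xi_l,\eta_l)$ and $(d_{li},d_{lj})$ given $d_{ij}$, and then invoke the dihedral relation \eqref{alpha} together with the orientation encoded in $\varphi_l$ to recover $d_{lk}$ and the sign of $V_{ijkl}$. The only minor difference is that the paper makes the link to the \emph{value} (not just the sign) of $V_{ijkl}^*$ explicit via the identity $\sin\varphi_l = 6V_{ijkl}\,d_{ji}/\bigl(\lVert p_{ji}\times p_{ki}\rVert\,\lVert p_{li}\times p_{lj}\rVert\bigr)$, whereas you implicitly rely on the distances fixing $|V_{ijkl}|$; either route closes the argument.
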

\begin{subequations}
\begin{align} 
\left\|p_2(t)-p_1(t)\right\| \rightarrow d_{21}^*, & & \text { as } t \rightarrow & \infty \label{dist_goal}\\
\xi_m(t) \rightarrow \xi_m^*,& \quad  m \geq 3, & \text { as } t \rightarrow & \infty \label{xi_obj}\\
\eta_m(t) \rightarrow \eta_m^*,& \quad  m \geq 3, & \text { as } t \rightarrow &\infty \label{eta_obj}\\
\varphi_m(t) \rightarrow \varphi_m^*,& \quad  m \geq 4, & \text { as } t \rightarrow & \label{phi_obj}\infty 
\end{align}
\label{bispherical_objective}
\end{subequations}
is equivalent to the satisfaction of \eqref{objective}.

\begin{proof}
    Consider a formation of agents with the sensing graph $\mathcal{G}$ constructed under Assumption \ref{assum:G}. Let $p$ be the stacked positions of agents and define the stacked vector of formation control variables as $\Gamma(p)=\left[\Gamma_2, \Gamma_3, \Gamma_4, \ldots, \Gamma_n\right]$, where $\Gamma_i$ is the formation control variables associated to agent $i$. More precisely we have $\Gamma_2  = d_{21}$, $\Gamma_3  =\left[\xi_3, \eta_3\right]$, and $\Gamma_m  =\left[\xi_m, \eta_m, \varphi_m\right],\; m \geq 4$. Assume that the desired formation is characterized by desired distances $d^*_{ji},\;(j,i) \in \mathcal{E}$ and desired signed volumes $V_{ijkl}^* \in \mathbf{V}(p^*)$, where $p^\ast$ denotes the stacked vector of the desired positions of agents. In this respect, one can define $\Gamma(p^*)=\left[\Gamma_{2}^*, \Gamma_{3}^*, \Gamma_{4}^*, \ldots, \Gamma_{n}^*\right]$, where $\Gamma_{2}^*  = d^*_{21}$, $\Gamma_{3}^*  =\left[\xi^*_3, \eta^*_3\right]$, $\Gamma_{m}^*  =\left[\xi^*_m, \eta^*_m, \varphi^*_m\right],\; m \geq 4$.
Now, we argue that the distances and signed volumes of the formation with graph $\mathcal{G}$ are equal to their counterpart in the desired formation if and only if $\Gamma(p)=\Gamma(p^*)$. 

\textbf{Only if: } Since all edges in the edge set are the same in the graph, it is obvious that $\Gamma_2 = \Gamma_{2}^*$. Due to the same reason, triangles on faces $F_{123}$ and $F_{1^*2^*3^*}$ are similar meaning that $\theta_{132} = \theta^*_{123}$ and consequently $\xi_3 = \xi_{3}^*$. Also, due to the definition of $\eta$ for agent $3$, $\eta_3 = \eta_{3}^*$. This means that $\Gamma_3 = \Gamma_{3}^*$. Now consider the tetrahedrons $1234$ and $1^*2^*3^*4^*$. Similarily, all 12 face angles in 4 faces of $1234$ are equal to their counterpart in $1^*2^*3^*4^*$. This means $\eta_4 = \eta_{4}^*$, $\xi_4 = \xi_{4}^*$, and $\alpha_{12} = \alpha_{12}^*$ according to \eqref{alpha}. One can show the following relation between the signed volume and the third bispherical coordinate holds for agent $l \geq 4$ \cite{liu2020orthogonal}:
\begin{equation}
    \sin \varphi_l = \frac{6V_{ijkl}d_{ji}}{\lVert p_{j i} \times p_{ki}\rVert\lVert p_{li} \times p_{lj}\rVert}.
    \label{volume_dihedral}
\end{equation}
 Now, due to the equality of face angles and edge angles and the signed volumes of the tetrahedrons, $\sin\varphi_4 = \sin \varphi_{4}^*$. This proves that $\varphi_4 = \varphi_{4}^*$, hence, $\Gamma_4 = \Gamma_{4}^*$. Following the same argument for agents $4<m\leq n$ will lead to $\Gamma(p)=\Gamma(p^*)$. 

\textbf{if: } For agent 2, it is obvious that $d_{21} = d_{21}^*$. Now, since $\xi_{3} = \xi_{3}^*$ one can show the following by the application of the cosines law:
\begin{equation}
    \frac{1+e^{2\eta_3}-\left(\frac{d_{21}}{d_{32}}\right)^2}{2 e^{\eta_3}}=\frac{1+e^{2\eta_{3}^*}-\left(\frac{d_{21}^*}{d_{32}^*}\right)^2}{2 e^{\eta_{3}^*}}.
    \label{eta_cosine}
\end{equation}
Now, knowing $d_{21} = d_{21}^*$ and $\eta_{3} = \eta_{3}^*$, one can show that $d_{32} = d_{32}^*$ and $d_{31} = d_{31}^*$. Similarly, one can prove $d_{41} = d_{41}^*$ and $d_{42} = d_{42}^*$. Thus, all face angles in $F_{123}$ and $F_{124}$ are equal to their counterpart in $1^*2^*3^*4^*$. Subsequently, one can show that $\theta_{134} = \theta_{134}^*$ using \eqref{alpha}. From \eqref{volume_dihedral}, since all lengths and face angles are equal to their counterparts we get $V_{1234} = V_{1234}^*$. Then, applying side-angle-side similarity theorem to faces $F_{134}$ and $F_{134}^*$, one can show $d_{43}=d_{43}^*$. Repeating this argument for all tetrahedral subgraphs will lead to the equality of all edge lengths in the edge sets as well as the equality of stacked signed volumes. 
\end{proof}

Applying the same reasoning as mentioned in the proof of Lemma \ref{lem:objective}, it can be demonstrated that if $\Gamma_{2} = \sigma \Gamma_2^*$, where $\sigma$ represents the scale factor and $\Gamma_{i} = \Gamma^*_{i}, \; 3\leq i \leq n$, all distances and signed volumes within the formation will increase or decrease by a factor of $\sigma$ in comparison to their counterparts in the desired formation. This implies that the entire formation can be resized by adjusting the distance between agent 2 (the first follower) and agent 1 (the leader).

Recall that based on \eqref{objective}, only the first follower (agent 2) and second follower (agent 3) have to keep specific distances with respect to their neighbors, while all other agents (ordinary followers) must preserve a designated signed volume and three exact distances relative to their neighboring agents. Thus, to achieve the target formation through satisfying \eqref{objective}, each ordinary follower ($m \geq 4$) is required to control four variables: three distances and a signed volume. Nevertheless, introducing an additional shape constraint (i.e. signed volume) for the ordinary followers may cause new undesirable equilibria due to the interaction of distance and signed-volume constraints at particular agent locations (refer to \cite{sugie2020global}, \cite{liu2019directed},
\cite{cao2019almost}, \cite{anderson2017formation} for an in-depth discussion and examples in 2D formations). It is crucial to note that these four variables do not always form an orthogonal space, in which each one can be controlled independently via moving along orthogonal directions. Lemma \ref{lem:objective} circumvents this by requiring ordinary followers to control merely three orthogonal (independent) formation variables \eqref{eta_obj}, \eqref{xi_obj}, and \eqref{phi_obj}. We will exploit this property to design decentralized formation controllers for the follower agents, as detailed in Section \ref{controller-sec}, thereby allowing (almost) global convergence to the desired shape. Moreover, as mentioned earlier, by altering the distance between agents 2 and  1, $\left\|p_2(t)-p_1(t)\right\|$, the formation's scale at steady-state can be controlled. Here, scaling refers to maintaining all angles while adjusting all edge lengths proportionally. Thus, by dynamically setting a target distance $d_{21}^*(t)$ relative to the leader, the first follower can modulate the formation's scale, which is vital in real-world formation control scenarios like navigating through tight spaces or avoiding obstacles. 
\subsection{Bispherical Coordinates Basis Vectors}
In the previous subsections we showed that the desired positions of agents $l\geq4$ w.r.t. their neighbors can be uniquely characterized by  bispherical coordinates. Here, we derive the bispherical coordinate basis associated with each follower agent $l\geq4$.

Note that, in each tetrahedral subgraph of $\mathcal{G}$, where $i<j<k<l, \;l\geq4$, one can define a virtual local Cartesian coordinate frame for agent $l$ 
 denoted by $\left\{C_l\right\}$, with its origin located in the middle of the $i-j$ line (see Fig. \ref{frame3d}). 
The basis of $\{C_l\}$ can be written in terms of the relative bearing vectors (expressed in a global coordinate frame) associated with agent $l$ as follows:
 \begin{equation}
    \hat{X}_l=-\hat{v}_{ji},\quad \hat{Z}_{l}=\frac{\hat{v}_{ji}\times\hat{v}_{ki}}{\lVert\hat{v}_{ji}\times\hat{v}_{ki}\rVert},\quad \hat{Y}_l=\hat{Z}_{l}\times\hat{X}_l.
    \label{cartesian}
\end{equation}
The bispherical coordinates are related to the $\left\{C_l\right\}$ frame with the following (almost) one-to-one (except at the foci of the bispherical coordinates, $i$ and $j$) transformation \cite{ARFKEN_1970}:
\begin{equation} 
\begin{gathered}
    x_l^{\left[C_l\right]} =\frac{a_l \sinh \eta_l}{\cosh \eta_l-\cos \xi_l},\quad
y_l^{\left[C_l\right]} =\frac{a_l \sin \xi_l\cos \varphi_l}{\cosh \eta_l-\cos \xi_l},\\
z_l^{\left[C_l\right]} =\frac{a \sin \xi_l \sin \varphi_l}{\cosh \eta_l-\cos \xi_l},
\label{bi2cart}
\end{gathered}
\end{equation}
where $p_l^{[C_l]}=[x_l^{[C_l]}, y_l^{[C_l]},z_l^{[C_l]}]^\top \in \mathbb{R}^3$
is the position of vertex $l$ with respect to frame $\left\{C_l\right\}$ and $a_l:=0.5\left\|p_{j i}\right\|>0$.
The bispherical coordinate system $\left( \xi_l, \eta_l, \varphi_l\right)$ is indeed a 3D orthogonal curvilinear coordinate system \cite{ARFKEN_1970},\cite{Moon_Spencer_1988} (similar to the spherical coordinate system), thus, a local orthogonal basis can be defined at each point in the 3D plane of $\left\{C_l\right\}$ showing the directions of increase for $\xi_l$, $\eta_l$, and $\varphi_l$. Figs. \ref{frame} and \ref{isoquant} altogether show orthogonal bispherical coordinates basis  $\hat{\xi}_l$, $\hat{\eta}_l$, and $\hat{\varphi}_l \in \mathbb{R}^3$ associated with $\left\{C_l\right\}$ at some arbitrary points of interest.
\begin{remark}
  The relations in \eqref{cartesian} and \eqref{bi2cart} are still valid for agent 3. Indeed since agent 3 is always on $\parallelogram_{123}$ plane, we have $z_3^{[C_l]}=0$ and $\varphi_3=0$. Moreover, when agents $i$, $j$, and $k$ are collinear or collocated, the basis in \ref{cartesian} are not well-defined. To tackle this issue, we can use Algorithm 1 of \cite{liu2021orthogonal}, which guarantees that $\hat{v}_{ji}\times\hat{v}_{ki}$ is well defined. This means that the output vector of \cite[Algorithm 1]{liu2021orthogonal} will be an arbitrarily selected vector perpendicular to $p_{ji}$ and $p_{ki}$ when agents are collinear or collocated. Also, it is worth mentioning that the unit orthogonal basis of $\{C_l\}$ are the normalized version of the orthogonal basis defined in \cite{liu2021orthogonal}.  
\end{remark}
\begin{figure}[!tbp]
	\centering
	\begin{subfigure}[t]{.8\linewidth}
		\centering
		\includegraphics[width=\linewidth]{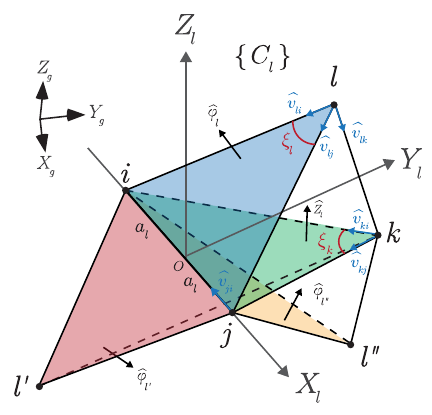}
		\caption{}
		\label{frame3d}
	\end{subfigure}\\ 

	\begin{subfigure}[t]{.8\linewidth} 
		\centering
		\includegraphics[width=\linewidth]{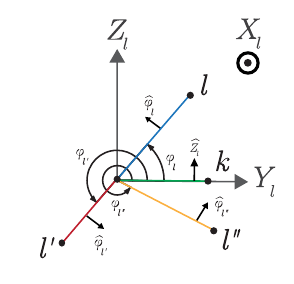}
		\caption{}
		\label{frame2d}
	\end{subfigure}
  	\caption{(a) Showing direction of $\hat{\varphi}$ for three cases of agent $l$ positions (indicated by $l$, $l'$ and $l''$).
  	Note that agents $i$, $j$, $l$, and $l'$ are on the same plane but different half-planes leading to opposite $\hat{\varphi}$ directions. (b) 2D view of $\{C_l \}$ showing $\hat{\varphi}$ and $\varphi$
  	for some different positions of agent $l$. (indicated by $l$, $l'$ and $l''$)} 
          \label{frame}
\end{figure}
  \begin{figure}[bpt!]
         \centering
         \includegraphics[width=.8\linewidth]{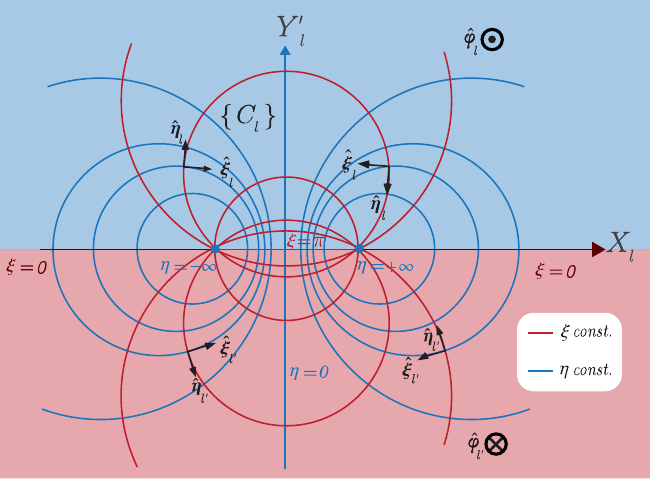}
         \caption{2D view of $i-j-l$ plane in Fig. \ref{frame3d}. In the entire upper half plane (blue) $\hat{\varphi}$ is pointing outwards, while it points inwards in the lower half plane (red). $Y^\prime_l$ is rotated version of $Y_l$ around $X_l$ lying down in $i-j-l$ plane. This figure depicts the projection of the isoquant surfaces of $\xi$, $\eta$ on $i-j-l$ plane and their corresponding orthogonal basis vectors $\hat{\xi}$ and $\hat{\eta}$ for two different positions of $l$ and $l'$. Note that $\hat{\xi}$ and $\hat{\eta}$ always lie on the $i-j-l$ plane, which are perpendicular to $\hat{\varphi}$.}
         \label{isoquant}
         \vspace{-0.5cm}
 \end{figure}
\begin{remark}
Agent $l$ can calculate the cartesian basis vectors in Eq.\eqref{cartesian} via the relative bearing vectors of its neighbors. More precisely, by direct measurements of $\hat{v}_{li}$, $\hat{v}_{lj}$, $\hat{v}_{lk}$ and the ratio of the distances $r_{l i j}$, $r_{ljk}$, agent $l$ can obtain $\hat{v}_{ji}$ and $\hat{v}_{ki}$ without any communication with its neighbors using the following relations:
\begin{equation}
        \hat{v}_{k i}=\frac{r_{l i k} \hat{v}_{li}-\hat{v}_{lk}}{\lVert r_{l i k} \hat{v}_{li}-\hat{v}_{lk}\rVert},\quad\hat{v}_{j i}=\frac{r_{l i j} \hat{v}_{li}-\hat{v}_{lj}}{\lVert r_{l i j} \hat{v}_{li}-\hat{v}_{lj}\rVert}.
\end{equation}
\end{remark}
\vspace{3mm}
\begin{remark}
It is known that typical onboard vision-based sensors like monocular cameras provide projective measurements lacking distance data, yielding only bearing information which facilitates angle determination in \eqref{xi_l} and \eqref{phi_l}\cite{tron2016distributed}. Furthermore, as outlined in \cite{cao2019ratio}, Sec. II.D], the ratio of distance measurements can be deduced from a single camera image by comparing projections of two identical, differently sized objects or markers. Hence, in this work,  equipping followers with low-cost vision sensors suffices for necessary data acquisition, in contrast to many similar studies in coordinate-free formation control assuming relative position measurements for all agents\cite{chen2022globally,chen2022maneuvering,sugie2020global,liu2019directed,liu2020distance,liu2021orthogonal,cao2019almost}.
\end{remark}
\begin{lemma}
  For a given tetrahedral directed sub-graph as in Fig. \ref{frame3d}, the bispherical coordinates basis $\hat{\xi}_l, \hat{\eta}_l, \hat{\varphi}_l,\; l\geq 4$ (see Figs. \ref{isoquant} and \ref{frame}) associated with the virtual local Cartesian frame $\left\{C_l\right\},\; l\geq 4$ in Fig. \ref{frame} can be expressed with respect to $\{C_l\}$ as follows:
\begin{subequations} \label{basisvectors}
\begin{align}
\hat{\xi}_l&=f_1 \hat{X}_l+f_2 f_3 \hat{Y}_l+f_2 f_4 \hat{Z}_l, \label{xi_hat}\\
\hat{\eta_l}&=-f_2 \hat{X}_l+f_1 f_3 \hat{Y}_l+f_1 f_4 \hat{Z}_l, \label{eta_hat}\\
\hat{\varphi}_l&=-f_4 \hat{Y}_l+f_3 \hat{Z}_l,\label{phi_hat}
\end{align}
\label{hat_eq}%
\end{subequations}
where $\hat{X}_{l}$, $\hat{Y}_l$, and $\hat{Z}_l$ are given by \eqref{cartesian}, which are the Cartesian basis vectors assigned to agent $l$ in $\{C_l\}$ and
\begin{equation}
\begin{gathered}
   f_1(\xi_l,\eta_l,\varphi_l)=\frac{-\sinh \eta_l \sin \xi_l}{\cosh \eta_l-\cos \xi_l},\; f_3(\varphi_l)=\cos\varphi_l,
\label{hat_1}%
\end{gathered}
\end{equation}
\begin{equation}
\begin{gathered}
    f_2(\xi_l,\eta_l,\varphi_l)=\frac{\cosh \eta_l \cos \xi_l-1}{\cosh \eta_l-\cos \xi_l},\; f_4(\varphi_l)= \sin\varphi_l.
    \label{hat_2}%
\end{gathered}
\end{equation}
\end{lemma}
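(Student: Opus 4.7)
The plan is to apply the standard construction of basis vectors in a 3D orthogonal curvilinear coordinate system to the bispherical chart. For each coordinate $q \in \{\xi_l, \eta_l, \varphi_l\}$ I compute the partial derivative $\partial p_l^{[C_l]}/\partial q$ from the transformation \eqref{bi2cart}, determine the scale factor $h_q = \|\partial p_l^{[C_l]}/\partial q\|$, and then set $\hat{q} = (1/h_q)\,\partial p_l^{[C_l]}/\partial q$. Because the right-hand sides of \eqref{bi2cart} already give the components of $p_l^{[C_l]}$ along $\hat{X}_l$, $\hat{Y}_l$, $\hat{Z}_l$, each resulting unit vector is automatically expressed as a linear combination of those three Cartesian basis vectors, which is exactly the form in \eqref{basisvectors}.

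Introducing the shorthand $D \coloneqq \cosh\eta_l - \cos\xi_l$, a routine quotient-rule computation gives, after pulling out the common prefactor $a_l/D^2$, the derivative with respect to $\xi_l$ with components $\bigl(-\sinh\eta_l\sin\xi_l,\ (\cosh\eta_l\cos\xi_l - 1)\cos\varphi_l,\ (\cosh\eta_l\cos\xi_l - 1)\sin\varphi_l\bigr)$, and the derivative with respect to $\eta_l$ with components $\bigl(1 - \cosh\eta_l\cos\xi_l,\ -\sinh\eta_l\sin\xi_l\cos\varphi_l,\ -\sinh\eta_l\sin\xi_l\sin\varphi_l\bigr)$; the derivative with respect to $\varphi_l$ is simpler and equals $(a_l\sin\xi_l/D)\,(0,\,-\sin\varphi_l,\,\cos\varphi_l)^\top$. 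The one nontrivial algebraic step is the identity $\sinh^2\eta_l\sin^2\xi_l + (\cosh\eta_l\cos\xi_l - 1)^2 = D^2$, which drops out after applying $\cosh^2\eta_l - \sinh^2\eta_l = 1$ and $\sin^2\xi_l + \cos^2\xi_l = 1$. This yields uniform scale factors $h_{\xi_l} = h_{\eta_l} = a_l/D$ and $h_{\varphi_l} = a_l\sin\xi_l/D$, and dividing the three partial derivatives by these factors reproduces exactly the coefficients $f_1, f_2, f_3, f_4$ from \eqref{hat_1}--\eqref{hat_2}, establishing \eqref{xi_hat}--\eqref{phi_hat}.

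The main obstacle is not conceptual but the clerical work of applying the quotient rule cleanly and recognizing the Pythagorean-type collapse in the scale-factor computation; no step is genuinely hard. One subtlety worth flagging is that the normalization of $\hat{\varphi}_l$ requires $\sin\xi_l \neq 0$, but $\xi_l \in \{0,\pi\}$ puts agent $l$ on the axis through the foci $i$ and $j$, coinciding with the singular locus of the bispherical chart already noted right after \eqref{bi2cart}, so no additional argument is needed there. Orthogonality of the resulting triple $(\hat{\xi}_l,\hat{\eta}_l,\hat{\varphi}_l)$ — a defining feature of the bispherical system — can be verified by direct dot products on the explicit expressions in \eqref{basisvectors}, or simply invoked from the classical treatments in \cite{ARFKEN_1970}.
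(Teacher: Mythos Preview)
Your proposal is correct and coincides with the paper's approach: the paper's proof consists of a single sentence stating that the result is straightforward and can be obtained as in \cite{ARFKEN_1970}, which is precisely the standard curvilinear-basis construction you carry out in detail. In that sense you have supplied the computation that the paper merely alludes to, and every step (the quotient-rule derivatives, the identity $\sinh^2\eta_l\sin^2\xi_l+(\cosh\eta_l\cos\xi_l-1)^2=D^2$, and the resulting scale factors $h_{\xi_l}=h_{\eta_l}=a_l/D$, $h_{\varphi_l}=a_l\sin\xi_l/D$) checks out.
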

\vspace{2mm}
\begin{proof}
    The proof is straightforward and can be followed similarly to \cite{ARFKEN_1970}.
\end{proof}
 Note that \eqref{hat_1} and \eqref{hat_2} are also valid for agent 3 since $\varphi_3 =0$, and $\hat{\varphi}_3 =\hat{Z}_3$.
\section{Proposed Controller}
\label{controller-sec}
\subsection{Formation Errors}
To quantify the control objectives we define 4 types of error variables. First, \textit{squared distance error} between agents 2 and 1 is defined as:
\begin{equation}
e_d=\left\|p_{21}\right\|^2-{d_{21}^{*2}}.
\label{error_distance}
\end{equation}
Second, the \textit{edge-angle} and \textit{logarithmic ratio of the distances errors} are defined as:
\begin{equation}
e_{\xi_m}=\xi_{m}-\xi_{m}^*, \quad e_{\eta_m}=\eta_m-\eta_m^*, \quad  m\geq3,
\label{error_xi}
\end{equation}
where 
$\xi_{m}^*$, and $\eta_m^*$ are given in 
\eqref{bispherical_star}.
Finally, the \textit{dihedral angle error} is defined as:
\begin{equation}
e_{\varphi_m}=\varphi_{m}-\varphi_{m}^*, \quad m\geq4,
\label{error_phi}
\end{equation}
where 
$\varphi_{l}^*$ is 
given in 
\eqref{bispherical_star}. 
Note that \eqref{error_xi} and \eqref{error_phi} are independent (orthogonal) error variables defined only for the followers. More precisely, by moving along each bispherical coordinate's basis, $\hat{\xi}_m$, $\hat{\eta}_m$, and $\hat{\varphi}_m$, each follower can reduce \eqref{error_xi} and \eqref{error_phi} respectively, without affecting the other error variable. Finally, due to the above discussion and Lemma \ref{lem:objective}, by adopting the bispherical coordinates approach, the control objective of \eqref{objective} is met by zero stabilization of the error signals defined in \eqref{error_distance}, \eqref{error_xi}, and \eqref{error_phi}.
\subsection{Proposed Control Law and Stability Analysis}
Notice that in the proposed formation control setup, the leader (agent 1) does not participate in forming the desired shape, thus its behavior is independent of the other agents. In this respect, the leader's control law is ignored without loss of generality.
We propose the following formation control laws:
\begin{subequations} \label{controllaws}
\begin{align}
u_1 & =0, \label{control_leader}\\
u_2 & =\kappa_2 e_d p_{21} = \kappa_2 e_d\left\|p_{21}\right\|\hat{v}_{21}, \label{control_second}\\
u_3 & =-\kappa_3 e_{\xi_3} \hat{\xi}_3-\lambda_3 e_{\eta_3}\hat{\eta}_3, \label{control_third}\\u_l & =-\kappa_l e_{\xi_l} \hat{\xi}_l-\lambda_l e_{\eta_l}\hat{\eta}_l - \gamma_l e_{\varphi_l}\hat{\varphi_l}, 
\quad l\geq4,
\label{control_follower}
\end{align}
\label{controller}%
\end{subequations}
where $\kappa_2$, $\kappa_3$, $\lambda_3$, and $\kappa_l,\lambda_l,\gamma_l,\; l\geq4$, are positive control gains and $\hat{\xi}_l$, $\hat{\eta}_l$, and $\hat{\varphi}_l$ are the bispherical coordinates basis associated with ordinary followers given in \eqref{basisvectors}. Note that the proposed control laws in \eqref{controllaws} are decentralized, since each agent only uses the measured information w.r.t. its neighbors. 
\begin{remark}
   Although the proposed control laws \eqref{controller} are given with respect to a virtual coordinate frame $\{C_l\}, \;l\geq3$ (only for the sake of analysis), we emphasize that the proposed formation controller can be implemented in any arbitrarily oriented local coordinate frame (i.e., in a coordinate-free fashion).
   Since all $e_h,\; h \in\left\{d, \eta_l, \xi_l, \varphi_l\right\}$ are all scalar variables (e.g. distance, angles) these errors are invariant to coordinate transformations. Now let the superscript $[l], l \geq 3$, indicate a quantity expressed in the local coordinate frame of the $l$-th agent. Furthermore, suppose that $\mathcal{R}_l \in \mathrm{SO}(3)$ is the transformation (rotation) matrix from the $l$-th local frame to the virtual frame. Notice that, we have $u_l=\mathcal{R}_l u_l^{[l]}, p_{l i}=\mathcal{R}_l p_{l i}^{[l]}=\mathcal{R}_l(p_i^{[l]}-p_l^{[l]})$, and consequently from \eqref{v_hat} we get $\hat{v}_{l i}=\mathcal{R}_l \hat{v}_{l i}^{[l]}, i<l \in \mathbb{N}$. Considering the definition of the virtual local Cartesian coordinate basis and the bispherical in \eqref{cartesian} and \eqref{hat_eq} used in \eqref{control_follower}, one can show that $u_l^{[l]}=\mathcal{R}_l^{-1} u_l  =-\kappa_l e_{\eta_l}\left(e_{\eta_l}\right) \hat{\eta}_l^{[l]}-\lambda_l e_{\xi_l}\left(e_{\xi_l}\right) \hat{\xi}_l^{[l]}-\gamma_l{\varphi_l}\left(e_{\varphi_l}\right) \hat{\varphi}_l^{[l]}$. Similarly, this claim can be also verified for agents 2 and 3.
\end{remark}

 The following theorem summarizes our main results. 
 \begin{theorem}\label{theorem:main}
    Consider a group of $n$ agents with dynamics \eqref{dynamics} in a 3D space. Let the desired formation be defined by a directed graph $\mathcal{G} = (\mathcal{V},\mathcal{E})$ under Assumption \ref{assum:G} along with the sets of desired formation parameters $d_{21}^*$, $\xi_{m}^*,\; \eta_{m}^*, \; m \geq 3$, and $\varphi_{m}^*,\; m \geq 4$. The decentralized control protocol \eqref{controller} ensures $\left\|p_{j i}\right\| \neq 0, \forall(j, i) \in \mathcal{E}, \forall t \geq 0$, and renders the formation errors in \eqref{error_xi} and \eqref{error_phi} almost globally asymptotically stable, which gurantees the satisfaction of the desired objectives in \eqref{bispherical_objective}.
\end{theorem}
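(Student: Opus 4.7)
The plan is to exploit the hierarchical DAG structure imposed by Assumption~\ref{assum:G}: each agent $l$ depends only on agents with strictly smaller indices, so the closed-loop system decomposes into a cascade and I would proceed by induction on $l$. For the base cases, the leader satisfies $u_1=0$, so I would fix $p_1$; for agent $2$ I would take $V_2=\tfrac14 e_d^2$ and use $\dot p_{21}=-u_2=-\kappa_2 e_d p_{21}$ to obtain $\dot V_2=-\kappa_2 e_d^2\|p_{21}\|^2\le 0$. A LaSalle argument, together with the fact that the radial equation $\tfrac{d}{dt}\|p_{21}\|^2=-2\kappa_2 e_d\|p_{21}\|^2$ has $\|p_{21}\|=0$ as an unstable equilibrium, then yields $\|p_{21}\|\to d_{21}^*$ from every initial configuration with $p_1\neq p_2$, proving both \eqref{dist_goal} and the non-collision statement at this level.

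For each follower $l\ge 3$, the central kinematic observation is the orthogonality of the bispherical basis \eqref{hat_eq} combined with the chain rule applied to \eqref{bi2cart}: when the neighbors $i,j,k$ of $l$ are held fixed, one has $\dot\xi_l=h_{\xi_l}^{-1}\,u_l\cdot\hat\xi_l$, and analogously for $\eta_l$ and $\varphi_l$, where $h_{\xi_l},h_{\eta_l},h_{\varphi_l}>0$ are the Lam\'e coefficients of the bispherical chart (strictly positive and bounded away from $0$ off the foci). Substituting \eqref{controller} decouples the error dynamics into $\dot e_{\xi_l}=-h_{\xi_l}^{-1}\kappa_l e_{\xi_l}$, $\dot e_{\eta_l}=-h_{\eta_l}^{-1}\lambda_l e_{\eta_l}$, $\dot e_{\varphi_l}=-h_{\varphi_l}^{-1}\gamma_l e_{\varphi_l}$, which admit the strict Lyapunov function $V_l=\tfrac12(e_{\xi_l}^2+e_{\eta_l}^2+e_{\varphi_l}^2)$ (with the convention $e_{\varphi_3}\equiv 0$). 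A standard cascade / ISS argument along the DAG then closes the induction: by the inductive hypothesis the neighbors $i,j,k$ converge, their velocities $\dot p_i,\dot p_j,\dot p_k$ tend to $0$, the perturbation terms they induce in $\dot e_{\xi_l},\dot e_{\eta_l},\dot e_{\varphi_l}$ vanish asymptotically, and $V_l$ becomes a bona fide decreasing Lyapunov function in the limit, driving $(e_{\xi_l},e_{\eta_l},e_{\varphi_l})\to 0$. Combined with Lemma~\ref{lem:objective}, this delivers the full target \eqref{bispherical_objective}.

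The hardest step is rigorously controlling the cross-level coupling during the transient: before the neighbors of $l$ have (nearly) converged, the virtual frame $\{C_l\}$ rotates and translates, and one must verify that the perturbation entering $\dot V_l$ cannot destabilize the subsystem; the most natural route is an ISS / time-scale-separation chain in the spirit of \cite{mehdifar20222,liu2021orthogonal}. Finally, the qualifier \emph{almost} global covers two measure-zero sets of initial data that I would explicitly exclude: (i) configurations in which some triple $i,j,k$ of neighbors is collinear, so that $\hat Z_l$ in \eqref{cartesian}, and hence $\varphi_l$ through the branching in \eqref{phi_l}, becomes ill-defined; and (ii) the bispherical foci $\|p_{li}\|=0$ or $\|p_{lj}\|=0$, where $\eta_l\to\pm\infty$. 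The blow-up of $e_{\eta_l}^2$ in $V_l$ near the foci is in fact the mechanism that keeps $\|p_{ji}\|$ bounded away from zero along closed-loop trajectories, closing the non-collision claim $\|p_{ji}\|\neq 0$ for all $(j,i)\in\mathcal{E}$ and all $t\ge 0$.
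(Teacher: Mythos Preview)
Your proposal is essentially the same strategy as the paper's proof: exploit the acyclic hierarchy to decompose the closed loop into a cascade, use $V_2=\tfrac14 e_d^2$ for the first follower, use $V_l=\tfrac12(e_{\xi_l}^2+e_{\eta_l}^2+e_{\varphi_l}^2)$ for each remaining follower, and rely on the orthogonality of the bispherical basis so that the unforced $l$-subsystem has a negative-definite $\dot V_l$; non-collision then follows from boundedness of $\eta_l$. The only noteworthy difference is how the cascade is closed: where you sketch an ISS/asymptotically-vanishing-perturbation argument, the paper instead shows GAS of each unforced outer subsystem $\dot e_l=f_l(e_l,0)$ and then invokes an off-the-shelf almost-GAS cascade theorem (Welde et al.), which cleanly handles the transient coupling you flag as the hardest step without needing to make the ISS chain quantitative.
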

\begin{proof}
The formation error dynamics can be regarded as a cascade connected system and the proof follows the stability analysis of cascaded systems. By proving almost Global Asymptotic Stability (a-GAS) of every outer subsystem, we establish the cascaded formation error system is a-GAS, leading to the satisfaction of \eqref{bispherical_objective}. The formation errors' dynamics corresponding to each agent are derived in Appendix \ref{error_dynamics}.

First, consider the two-agent subsystem including agents 1 and 2. Using the Lyapunov function candidate $W_2 \coloneqq e_d^2/2$, where $e_d$ is defined in \eqref{error_distance}, one can show that this subsystem under control inputs \eqref{control_leader}
and \eqref{control_second} is a-GAS if $p_{21}(0) \neq 0$, that is $e_d = -{d_{21}^*}^2$ (see \cite{liu2020distance} for a detailed analysis). Now if agent 3 is added to this subsystem, we get the following cascade interconnection:
\begin{equation}
\begin{aligned}
& \dot{e}_3=f_3\left({e}_3, e_2\right), \\
& \dot{e}_2=g_2\left(e_2\right),
\end{aligned}
\label{cascade-three}
\end{equation}
where $e_3 \coloneqq [e_{\xi_3},e_{\eta_3}]$ is the vector of stacked formation errors for agent 3 and $e_2 \coloneqq e_d$. Note that the exact expressions of $f_3$ and $g_2$ can be found by substituting control laws \eqref{control_second} and \eqref{control_third} into \eqref{distance_dot}, \eqref{xi_dot}, and \eqref{eta_dot}. 

Now we show that the unforced outer subsystem $\dot{e}_3=f_3\left({e}_3, 0\right)$ is Globally Asymptotically Stable (GAS). In this direction consider $W_3 \coloneqq e_{\xi_3}^2/2+e_{\eta_3}^2/2$ as a candidate Lyapunov function corresponding to agent 3's formation errors. Taking the time-derivative of $W_3$ and substituting \eqref{control_third} into \eqref{xi_dot} and \eqref{eta_dot} gives:

\begin{equation}
\dot{W}_3 = -\frac{ 2 \left(\cosh \eta_3 - \cos\xi_3\right)}{\lVert p_{21}\rVert}(\kappa_3 e_{\xi_3}^2+\lambda_3 e_{\eta_3}^2),
\end{equation}
where $\cosh \eta_3 - \cos\xi_3> 0$. This is because $\cosh \eta_3 \geq 1$ and its minimum occurs at $\eta_3=0$, while $\cos \xi_3 \leq 1$ and its maximum occurs at $\xi_3= 0$. Note that these extremum conditions never occur at the same time for any position of agent 3 since the isoquant surface of $\eta_3 = 0$ is $Y_3-Z_3$ plane, while $\xi_3 = 0$ occurs on $X_3$ axis (See Fig. \ref{frame2d}) except for points on the line passing through agents 1 and 2. Since the surface and lines do not intersect each other, the above statement is true. As a result, $\dot{W}_3 < 0$ holds except at the equilibrium point $e_3 = [0,0]$, establishing GAS of the unforced outer subsystem $\dot{e}_3=f_3\left({e}_3, 0\right)$. Now let $\mathcal{B}({e_2}=0)$ be the basin of attraction of the equilibrium point $e_2 = 0$ of $g_2(e_2)$. Note that $\mathcal{B}({e_2}=0) = \mathbb{R}\backslash\{{-d_{21}^*}^2\}$ and $e_3$ in $E_3 \times \mathcal{B}({e_2}=0)$ is bounded, where $E_3 = (-\pi, \pi) \times \mathbb R$. Thus, \cite[Theorem 1]{Welde-aGAS} proves a-GAS of the cascade system in \eqref{cascade-three}.  

As we grow the formation error system by adding agent $l$ in the tetrahedral subgraph of $\mathcal{G}$ with neighboring agents $i$, $j$, $k$ where $i<j<k<l,\; l\geq 4$,  we obtain the following cascade system at each step:
\begin{equation}
\begin{aligned}
& \dot{e}_l=f_l\left({e}_l, \Xi_{l-1}\right), \\
& \dot{\Xi}_{l-1}=g_{l-1}\left(\Xi_{l-1}\right),
\end{aligned}
\label{cascade-l}
\end{equation}
where $e_l \coloneqq [e_{\xi_l},e_{\eta_l},e_{\varphi_l}]$ is the vector of stacked formation errors for agent $l$ and $\Xi_{l-1} \coloneqq [e_2,e_3,...,e_{l-1}]$ is the vector of stacked formation errors for agents $2,3,...,l-1$. Note that the exact expressions of $f_l$ and $g_{l-1}$ can be found by substituting control laws \eqref{control_second} and \eqref{control_third}, and \eqref{control_follower} into \eqref{distance_dot}, \eqref{xi_dot}, \eqref{eta_dot}, and \eqref{phi_dot}.
Now we argue that the unforced outer subsystem $\dot{e}_l = f_l\left({e}_l, \bold{0}_{\Xi_{l-1}}\right)$ is GAS where $\bold{0}_{\Xi_{l-1}}$ is vector of zeros with $3l-9$ entries corresponding to equilibrium point of the inner subsystem $\dot{\Xi}_{l-1}=g_{l-1}\left(\Xi_{l-1}\right)$ in \eqref{cascade-l}. Consider $W_l = e_{\xi_l}^2/2+e_{\eta_l}^2/2 + e_{\varphi_l}^2/2$ as a candidate Lyapunov function corresponding to agent $l$'s formation errors. Applying the time-derivative of $W_l$ and  substituting \eqref{control_third} into \eqref{xi_dot} and \eqref{eta_dot} gives: 
\begin{equation}
    \dot{W}_l = -\frac{ 2 \left(\cosh \eta_l - \cos\xi_l\right)}{\lVert p_{ij}\rVert}(\kappa_l e_{\xi_l}^2+\lambda_l e_{\eta_l}^2+\gamma_l\frac{e_{\varphi_l}^2}{\sin\xi_l}).
\end{equation}
One can show that $\dot{W}_l < 0$ except at the equilibrium point $e_l = [0,0,0]$ leading to GAS of $\dot{e}_l = f_l\left({e}_l, \bold{0}_{\Xi_{l-1}}\right)$.
Now let $\mathcal{B}({\Xi_{l-1}}= \bold{0_{\Xi_{l-1}}})$ be the basin of attraction of the equilibrium point ${\Xi_{l-1}}= \bold{0_{\Xi_{l-1}}}$ of $g_{l-1}({\Xi_{l-1}})$. Note that $e_l$ in $E_l \times \mathcal{B}({\Xi_{l-1}=\bold{0_{\Xi_{l-1}}}})$ is bounded, where $E_l = (-\pi,\pi)\times \mathbb{R}\times(-2\pi,2\pi)$ and $\mathcal{B}({\Xi_{l-1}}= \bold{0_{\Xi_{l-1}}}) = E_{l-1} \times ... \times E_3 \times \mathcal{B}(e_2 = 0)$. Thus, \cite[Theorem 1]{Welde-aGAS} establishes a-GAS of the cascade system in \eqref{cascade-l}. Note that a-GAS property of the cascade system ensures boundedness of formation errors of the agents. This means $\eta_3,\eta_4 ..., \eta_l$ are bounded meaning that every agent will never get collocated on its neighboring agents and avoid collision with them ($\lVert p_{ij}\rVert \neq 0,\; \forall (j,i) \in \mathcal{E} $) since $\eta_l = \pm \infty$ occurs at the foci of bispherical coordinates given in \eqref{bi2cart}.   
\end{proof}

\section{Simulation Results}
Consider six agents that are distributed at random positions (leader is in origo.) in a 3D workspace at $t=0$. The objective is to form a unit octahedron until $t = 10$ seconds and then double up its scale until $t =20$ seconds under control law \eqref{controllaws}. The edge set of the directed sensing graph (obeying Assumption \ref{assum:G}) is considered as $\mathcal{E}=$ $\{(2,1),(3,1),(3,2),(4,1),(4,2),(4,3),(5,2),(5,3),(5,4)$, $(6,3),(6,4),(6,5)\}$. The sensing graph among the agents is depicted in \ref{sensing}. The desired shape is characterized by setting all of the desired lengths in the edge set equal to 1 except for $d_{32}^*=d_{64}^*= \frac{\sqrt{2}}{2}$. The desired signed volumes are assumed to be $V_{1234}^*=V_{2345}^*=-V_{3456}^*=\frac{\sqrt{2}}{12}$. To realize the formation scaling the desired distance between agents 1 and 2 is initially set to $d_{21}^* = 1$ until $t = 10$ and then it is changed to $d_{21}^* = 2$. 
The control gains in \eqref{control_second},\eqref{control_third}, and \eqref{control_follower} are all set to 2. The trajectory of each agent during $t \in [0,10]$, where the desired formation scale is equal to 1 is shown in Fig. \ref{initial-traj}. Fig. \ref{scaling-traj} shows the final scaled-up shape as well as the trajectories of the agents during $t \in [10,20]$.

The evolution of squared distance error and bispherical formation errors, defined in \eqref{error_distance}, \eqref{error_xi}, and \eqref{error_phi}, are shown in Fig. \ref{error_fig}. The results show that the agents successfully converge to the desired formation after approximately 10 units of time despite their random initial positions. Moreover, after the step shift in agent 2's desired distance at $t=10s$, which induces a sudden increase in the follower's formation errors (see Fig. \ref{error_fig}), the formation eventually scales up as soon as the follower's formation errors converge back to zero. 

\begin{figure}[!tbp]
    \centering
    \begin{subfigure}[b]{.69\linewidth} 
        \centering
        \includegraphics[width=\linewidth]{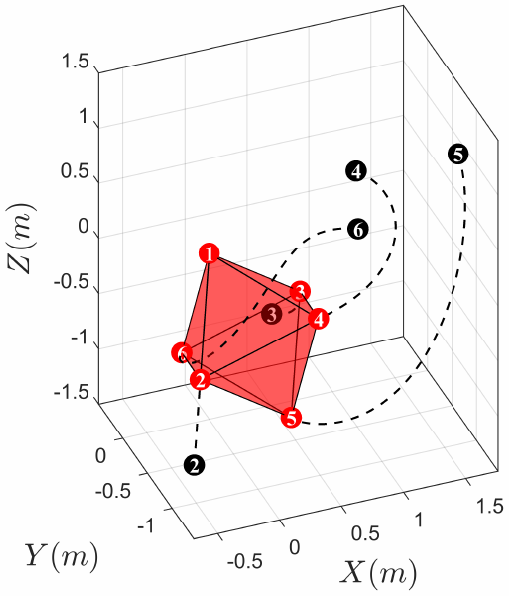}
        \caption{}
        \label{initial-traj}
    \end{subfigure}

    \begin{subfigure}[b]{.69\linewidth} 
        \centering
        \includegraphics[width=\linewidth]{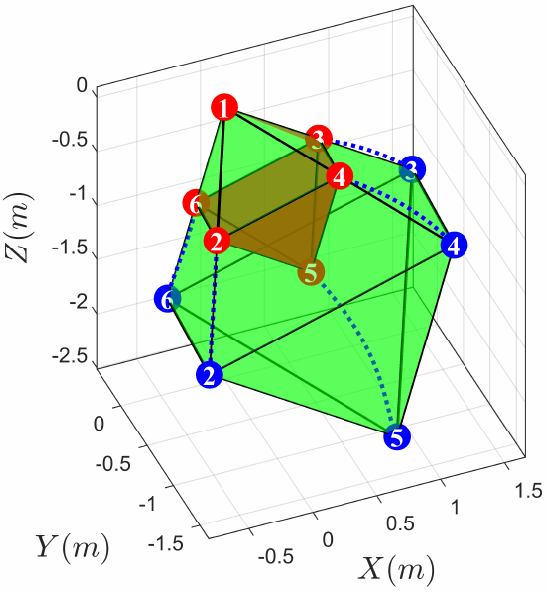}
        \caption{}
        \label{scaling-traj}
    \end{subfigure}
    \caption{(a) Trajectory of agents until $t=10$.  (b) Scaling simulation: agents scaling up after $t=10$ because of a change in $d_{21}^*$.}
    \label{traj}
\end{figure}

\begin{figure}[!tbp]
	\centering
	\begin{subfigure}[t]{.85\linewidth} 
		\centering
		\includegraphics[width=\linewidth]{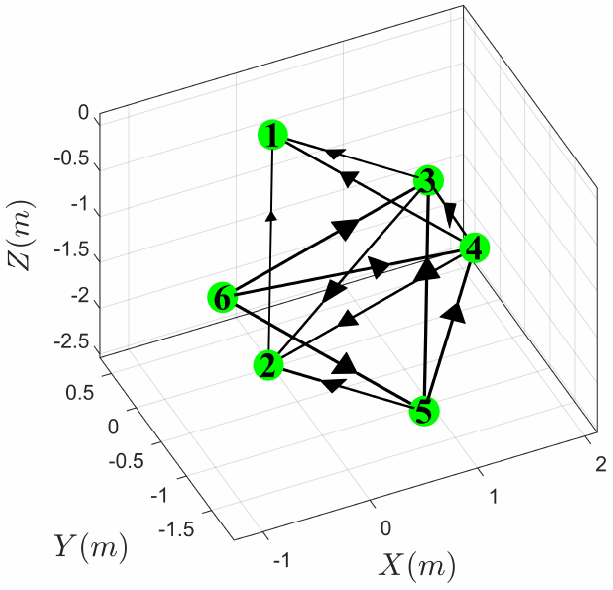}
		\caption{}
		\label{sensing}
	\end{subfigure}
	
	\vspace{1cm} 
	
	\begin{subfigure}[t]{.85\linewidth} 
		\centering
		\includegraphics[width=\linewidth]{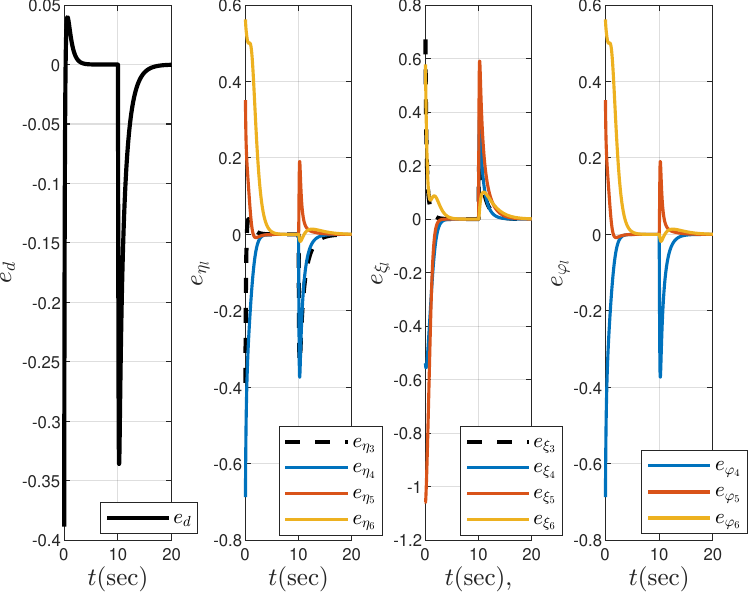}
		\caption{}
		\label{error_fig}
	\end{subfigure}
	\caption{(a) Sensing graph among agents. (b) Formation errors vs time.}
	\label{traj}
	\vspace{-.6cm} 
\end{figure}
 \section{Conclusion}
We introduced a novel 3D formation control scheme, with (almost) global convergence to the desired shape under acyclic triangulated directed sensing topologies. We utilized orthogonal bispherical coordinates to uniquely characterize the desired formation shape and effectively avoid undesired equilibria imposed by the controller design. Our proposed control scheme also allows for formation scaling through adjustments of the distance between agents 1 and 2. Applying the stability theory of cascade-connected systems, we established that the proposed decentralized controllers make the closed-loop system almost globally asymptotically stable. Furthermore, we reasoned that the proposed control scheme can be readily implemented in arbitrarily oriented local coordinate frames of the (follower) agents using onboard vision sensors.
Future work will be devoted to extending these findings to agents with more complex dynamics and addressing the formation maneuvering problem.

\bibliographystyle{ieeetr}
\bibliography{Refs}

\appendices

\section{Error Dynamics}\label{error_dynamics}
\vspace{-.1cm}
Taking the time derivative of \eqref{error_distance} and substituting \eqref{control_second} gives:
\begin{equation}
\dot{e}_d=-2 \kappa_2\left\|p_{21}\right\|^2 e_d=-2 \kappa_2 e_d(e_d+{d_{21}^*}^2).
\label{distance_dot}
\end{equation}
Taking the time derivative of $e_\xi$ in \eqref{error_xi} and using \eqref{xi_l}, while following a similar approach as in \cite{buckley2021infinitesimal}, one can derive the error dynamics of the first bispherical coordinate for agent $l\; \geq 3$ as follows:

\begin{equation}
\dot{e}_{\xi_l}=\dot{\xi_l}=\dot{\theta}_{ilj}=N_{j li} \dot{p}_i-\left(N_{j l i}+N_{i lj}\right) \dot{p}_l+N_{i l j} \dot{p}_j,
\label{xi_dot}
\end{equation}

where 
\begin{equation}
\begin{gathered}
   N_{jli}=\frac{-\hat{v}_{lj}^{\top} P_{\hat{v}_{l i}}}{\lVert{p_{l i}}\rVert \sin \xi_l}, \quad N_{ilj}=\frac{-\hat{v}_{li}^{\top} P_{\hat{v}_{l j}}}{\lVert{p_{l j}}\rVert \sin \xi_l},\\ P_{\hat{v}_{l j}} = I_3-\hat{v}_{l j}\hat{v}_{l j}^\top,\quad P_{\hat{v}_{l i}} = I_3-\hat{v}_{l i}\hat{v}_{l i}^\top. 
   \end{gathered}
\end{equation}
 Rewriting elements of \eqref{xi_dot} in $\{C_l\}$ using \eqref{bi2cart}, one can verify the followings after some algebraic simplifications:

\begin{equation}
\begin{aligned}
    &N_{j li}=\frac{1}{2a_l}\left[\begin{array}{c}-e^{-\eta_l} \sin \xi_l\\\cos \varphi_l\left(1-e^{-\eta_l} \cos \xi_l\right)\\ \sin \varphi_l\left(1-e^{- \eta_l} \cos \xi_l\right)\end{array}\right]^\top,\\ &N_{ilj}=\frac{1}{2a_l}\left[\begin{array}{c}e^{\eta_l} \sin \xi_l\\\cos \varphi_l\left(1-e^{\eta_l} \cos \xi_l\right)\\\sin \varphi_l\left(1-e^{\eta_l} \cos \xi_l\right)\end{array}\right]^\top,\\ -&\left(N_{jli}+N_{ilj}\right) =
\frac{\left(\cosh\eta_l-\cos\xi_l\right)}{a_l}\hat{\xi}_l^\top.
\label{xi_dot_terms}
\end{aligned}
\end{equation}
Moreover, taking the time derivative of $e_{\eta_l}$ in \eqref{error_xi} and using \eqref{eta} yields \cite{mehdifar20222}:
\begin{equation}
\dot{e}_{\eta_l}=\dot{\eta_l} =\frac{p_{li}^\top\dot{p}_{li}}{\lVert{p_{li}}\rVert^2}-\frac{p_{lj}^\top\dot{p}_{lj}}{\lVert{p_{lj}}\rVert^2}= M_i \dot{p}_i + M_j\dot{p}_j + M_l\dot{p}_l.
\label{eta_dot}
\end{equation}
Once again, one can verify the followings by writing elements of \eqref{eta_dot} in $\{C_l\}$ using \eqref{bi2cart}:
\begin{equation}
\begin{gathered}
M_i=\frac{-1}{2 a_l}\left[\begin{array}{c}
1-e^{-\eta_l} \cos \xi_l \\
e^{-\eta_l} \sin \xi_l \cos \varphi_l \\
e^{-\eta_l} \sin \xi_l \sin \varphi_l
\end{array}\right]^\top,\\ M_j = -\frac{1}{2 a_l}\left[\begin{array}{c}
1-e^{\eta_l} \cos \xi_l \\
-e^{\eta_l} \sin \xi_l \cos \varphi_l \\
-e^{\eta_l} \sin \xi_l \sin \varphi_l
\end{array}\right]^{\top},\\M_l 
= \frac{\left(\cosh\eta_l-\cos\xi_l\right)}{a_l}\hat{\eta}_l^\top. \end{gathered}
\end{equation}
Applying the time derivative to $e_{\varphi_l}$ in \eqref{error_phi} and using \eqref{alpha} and \eqref{bi2cart} gives:
\begin{equation}
    \dot{e}_{\varphi_l}=\dot{\varphi_l} = L_i \dot{p}_i + L_j\dot{p}_j + L_k\dot{p}_k +L_l\dot{p}_l,
\label{phi_dot} 
\end{equation}
where:
\begin{align}
          L_i&= -\frac{{\hat{\varphi}_l}^\top\left(e^{-\eta_l}-\cos \xi_l\right)}{2 a_l \sin \xi_l}
     +\frac{\hat{Z}_l^\top\left(e^{-\eta_k}-\cos \xi_k\right)}{2 a_l \sin \xi_k}, 
\notag\\L_j&=-\frac{{\hat{\varphi}_l}^\top\left(e^{\eta_l}-\cos \xi_l\right)}{2 a_l \sin \xi_l}+\frac{\hat{Z}_l^{\top}\left(e^{\eta_k}-\cos \xi_k\right)}{2 a_l \sin \xi_k}, 
\\L_k &=-\frac{\hat{Z}_l^\top\left(\cosh \eta_k-\cos \xi_k\right)}{a_l \sin \xi_k}  ,\; L_l =\frac{{\hat{\varphi}_l}^\top\left(\cosh \eta_l-\cos \xi_l\right)}{a_l \sin \xi_l}.  \notag
\end{align}

\end{document}